\newtheorem{definition}{Definition}
\newcommand{\be}{\begin{equation}}
\newcommand{\ee}{\end{equation}}
\DeclareMathSymbol{\Lambda}{\mathord}{operators}{"03}
\newtheorem{thm}{Theorem}[section]
\newtheorem{prop}[thm]{Proposition}
\newtheorem{remark}[thm]{\it Remark}
\begin{document}

\title{On non-abelian quadrirational Yang-Baxter maps}

\author[P. Kassotakis]{Pavlos Kassotakis}
\address{Pavlos Kassotakis, -----Nicosia, Cyprus}
 \email{pavlos1978@gmail.com}

\author[T. Kouloukas]{Theodoros Kouloukas}
\address{Theodoros Kouloukas, School of Mathematics and Physics, University of Lincoln, Lincoln LN6 7TS, U.K.}
\email{TKouloukas@lincoln.ac.uk}

\date{\today}
\subjclass[2020]{37K60, 39A14, 37K10, 16T25}

\begin{abstract}
We introduce four lists of families of non-abelian quadrirational Yang-Baxter maps.
\end{abstract}

\dedicatory{Dedicated to the Memory of   \\Aristophanes Dimakis, 1953 -- 2021}

\maketitle

\vspace{2pc}
\noindent{\it Keywords}: Yang-Baxter maps, non-abelian integrable difference systems




\section{Introduction}

In the recent years there is a growing interest in deriving and extending discrete
integrable systems to the non-abelian domain.  At the same time there is an intrinsic connection of discrete
integrable systems with Yang-Baxter maps \cite{ABS:YB,pap2-2006,Tasos:ell,Kassotakis_2012,Atkinson:2012i,Kouloukas:2012,Kouloukas:2015,Atkinson:2018,BAZHANOV2018509,Kass2,Kels:2019II}. Although very important examples of non-commutative Yang-Baxter maps exist in the literature \cite{Kajiwara:2002,Goncharenko:2001,Goncharenko:2004,ABS:YB,veselov-2003,Doliwa_2014,Dimakis:2019,Dimakis_Korepanov:2021}, the non-abelian counterparts of the Harrison map \cite{pap2-2006}, a.k.a. the nonlinear superposition formula for the B\"acklund transformation of the Ernst equation \cite{Harrison:1978},
referred to as $H_I$ in \cite{Papageorgiou:2010}, and of the $F_I$ \cite{ABS:YB} quadrirational Yang-Baxter maps are not known. The Harrison map $H_I$ as well as the $F_I$ map are the canonical forms of the generic maps ({\em top} members) of two non-equivalent lists of families of Yang-Baxter maps with five members each, the $H$-list and the $F-$list respectively. The maps $H_I$ and $F_I$ are considered top members of the corresponding lists since the remaining members can arise through degeneracies.

In this article we provide explicitly the non-abelian avatars 
of the $H_I$ and the $F_I$ Yang-Baxter maps, which participate as the  generic maps of   what we will call the $\mathcal{H}-$list and the $\mathcal{F}-$list respectively. In addition we provide the generic maps of 
two additional non-equivalent lists that we refer to as the   $\mathcal{K}$-list and the $\mathcal{\Lambda}-$list. The generic members of  $\mathcal{K}$-list and the $\mathcal{\Lambda}-$list in the abelian setting are both equivalent to the $H_I$ Yang-Baxter map, which is not the case in the non-abelian setting as we  show.  We start this article with a short introduction followed by Section \ref{sec2} where the basic definitions used throughout this paper are introduced. In addition, we provide the non-abelian version of the so-called Adler map as a motivating example. In Section \ref{sec3}, via a Lax pair formulation, we derive the non-abelian versions of the the top members of the $\mathcal{K},\mathcal{\Lambda},\mathcal{H}$ and $\mathcal{F}$ lists of families of quadrirational Yang-Baxter maps. In full extend these lists are presented in Appendix \ref{app1}. Finally, we conclude this article in Section \ref{sec4} where we present some ideas for further research.

\section{Definitions and a non-abelian extension of the Adler map} \label{sec2}
Let $\mathbb{X}$ be any set. We proceed with the following definitions.

\begin{definition}
The  maps $R: \mathbb{X} \times \mathbb{X} \rightarrow \mathbb{X} \times \mathbb{X}$ and $\widehat R: \mathbb{X} \times \mathbb{X} \rightarrow \mathbb{X} \times \mathbb{X}$  will be called $YB$ equivalent if it exists a bijection $\kappa: \mathbb{X} \rightarrow \mathbb{X}$ such that $(\kappa\times \kappa) \, R=\widehat R \, (\kappa\times \kappa).$
\end{definition}

\begin{definition}
A bijection $\phi: \mathbb{X}\rightarrow \mathbb{X}$ will be called {\em symmetry} of the    map $R: \mathbb{X} \times \mathbb{X}\rightarrow \mathbb{X} \times \mathbb{X},$ if  $(\phi\times \phi) \, R= R \, (\phi\times \phi).$
\end{definition}

\begin{definition}[Yang-Baxter map]
A map $R: \mathbb{X} \times \mathbb{X}\ni(u,v)\mapsto (x,y)=(x(u,v),y(u,v))\in \mathbb{X}\times \mathbb{X},$ will be called {\em Yang-Baxter map} if it satisfies the {\em Yang-Baxter relation}
\begin{align*}
R_{12}\circ R_{13}\circ R_{23}= R_{23}\circ R_{13}\circ R_{12},
\end{align*}
where $R_{i,j}$ $i,j\in\{1,2,3\},$ denotes the action of the map $R$ on the $i-$th and the $j-$th factor of $\mathbb{X}\times \mathbb{X}\times \mathbb{X},$ i.e.
$R_{12}:(u,v,w)\mapsto (x(u,v),y(u,v),w),$ $R_{13}:(u,v,w)\mapsto (x(u,w),v,z(u,w)),$ and $R_{23}:(u,v,w)\mapsto (u,y(v,w),z(v,w)).$
\end{definition}
Alternatively we can use the  definition of {\em 3D-compatible maps} \cite{ABS:YB}. Let $F: \mathbb{X} \times \mathbb{X}\mapsto \mathbb{X} \times \mathbb{X},$ be a map and  $F_{ij}$  $i<j\in\{1,2,3\},$ be the maps that act as $F$ on the $i-$th and $j-$th factor of $\mathbb{X} \times \mathbb{X}\times \mathbb{X}.$   For this definition it is convenient to denote these maps in components   as follows
\begin{align*}
F_{ij}: (u^i,u^j,u^k)\mapsto (u^i_j,u^j_i,u^k)=(u^i_j(u^i,u^j),u^j_i(u^i,u^j),u^k),&& i\neq j\neq k\neq i\in\{1,2,3\}.
\end{align*}
\begin{definition}[3D-compatible map \cite{ABS:YB}]
A map $F: \mathbb{X} \times \mathbb{X}\rightarrow \mathbb{X} \times \mathbb{X}$ will be called {\em 3D-compatible map} if it holds $u^{i}_{jk}=u^i_{kj}$ i.e.
\begin{align*}
u^i_j\left(u^i_k(u^i,u^k),u^j_k(u^j,u^k)\right)=u^i_k\left(u^i_j(u^i,u^j),u^k_j(u^k,u^j)\right),&& i\neq j\neq k\neq i\in\{1,2,3\}.
\end{align*}
\end{definition}

\begin{remark}
$YB$ equivalency respects the Yang-Baxter property as well as 3D-compatibility.
\end{remark}

The following proposition  was considered in \cite{Papageorgiou:2010}.

\begin{prop} \label{prop:sec2:1}
Let $\phi: \mathbb{X}\rightarrow \mathbb{X}$ a symmetry of the Yang-Baxter map $R:\mathbb{X}\times \mathbb{X} \rightarrow \mathbb{X}\times \mathbb{X}.$ Then the map
\begin{align*}
{\widehat R}=(\phi^{-1}\times id)\,R\,(id\times\phi),
\end{align*}
is also a Yang-Baxter map.
\end{prop}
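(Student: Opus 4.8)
The plan is to reduce the Yang--Baxter relation for $\widehat R$ to the one for $R$ by carefully tracking the auxiliary bijections. The first step is to record a second factorisation of $\widehat R$. Since $\phi$ is a bijection, the symmetry hypothesis $(\phi\times\phi)\,R=R\,(\phi\times\phi)$ is equivalent to $(\phi^{-1}\times\phi^{-1})\,R=R\,(\phi^{-1}\times\phi^{-1})$; pre- and post-composing the two candidate expressions by suitable inverses, the identity $(\phi^{-1}\times id)\,R\,(id\times\phi)=(id\times\phi)\,R\,(\phi^{-1}\times id)$ is seen to reduce exactly to this relation, so
\begin{align*}
\widehat R=(\phi^{-1}\times id)\,R\,(id\times\phi)=(id\times\phi)\,R\,(\phi^{-1}\times id).
\end{align*}
Having both factorisations available is what makes the subsequent computation collapse.

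For $k\in\{1,2,3\}$ let $\phi^{(k)}$ denote the map on $\mathbb{X}\times\mathbb{X}\times\mathbb{X}$ acting as $\phi$ on the $k$-th factor and as the identity on the other two, and $\phi^{-(k)}$ its inverse; then $\phi^{(j)}$ and $\phi^{(k)}$ commute, and $\phi^{(k)}$ commutes with $R_{ij}$ whenever $k\notin\{i,j\}$. The idea is to use the \emph{first} factorisation for the factors $\widehat R_{12},\widehat R_{23}$ and the \emph{second} for $\widehat R_{13}$, which gives $\widehat R_{12}=\phi^{-(1)}R_{12}\phi^{(2)}$, $\widehat R_{13}=\phi^{(3)}R_{13}\phi^{-(1)}$ and $\widehat R_{23}=\phi^{-(2)}R_{23}\phi^{(3)}$. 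Composing $\widehat R_{12}\circ\widehat R_{13}\circ\widehat R_{23}$ and pushing the inner auxiliary maps past everything they commute with, the $\phi^{(2)}$ coming from $\widehat R_{12}$ slides to the right (past $\phi^{(3)}$, $R_{13}$ and $\phi^{-(1)}$) and cancels the $\phi^{-(2)}$ of $\widehat R_{23}$, while the remaining $\phi^{(3)}$ and $\phi^{-(1)}$ slide out to the two ends; the result is
\begin{align*}
\widehat R_{12}\circ\widehat R_{13}\circ\widehat R_{23}=\Psi\circ\big(R_{12}\circ R_{13}\circ R_{23}\big)\circ\Psi,\qquad \Psi:=\phi^{-(1)}\phi^{(3)}.
\end{align*}

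Running the mirror computation for the other side --- now using the second factorisation for $\widehat R_{23},\widehat R_{12}$ and the first for $\widehat R_{13}$ --- the same cancellations occur and produce $\widehat R_{23}\circ\widehat R_{13}\circ\widehat R_{12}=\Psi\circ(R_{23}\circ R_{13}\circ R_{12})\circ\Psi$ with the \emph{same} conjugating map $\Psi$. Since $R$ is a Yang--Baxter map we have $R_{12}\circ R_{13}\circ R_{23}=R_{23}\circ R_{13}\circ R_{12}$; sandwiching this identity between the two copies of $\Psi$ yields the Yang--Baxter relation for $\widehat R$. The only genuinely delicate point is the bookkeeping in the last two paragraphs: with a naive uniform choice of factorisation for the three $\widehat R_{ij}$ one is left with stray terms $\phi^{(k)}\phi^{(k)}$ that fail to cancel, whereas the mixed choice above makes every auxiliary map either commute through the $R_{ij}$'s or annihilate against its inverse, leaving exactly the conjugation by $\Psi$ on each side.
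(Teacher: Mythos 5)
Your argument is correct. The paper itself gives no proof of this proposition --- it is stated as a known result, attributed to Papageorgiou--Suris--Tongas--Veselov \cite{Papageorgiou:2010} --- so there is nothing internal to compare against; your write-up is a valid self-contained verification. The two key points both check out: the alternative factorisation $\widehat R=(id\times\phi)\,R\,(\phi^{-1}\times id)$ is indeed equivalent, after clearing the outer bijections, to the symmetry relation $R\,(\phi\times\phi)=(\phi\times\phi)\,R$; and with the mixed choice $\widehat R_{12}=\phi^{-(1)}R_{12}\phi^{(2)}$, $\widehat R_{13}=\phi^{(3)}R_{13}\phi^{-(1)}$, $\widehat R_{23}=\phi^{-(2)}R_{23}\phi^{(3)}$ every auxiliary map either commutes past the $R_{ij}$ whose indices it avoids or cancels against its inverse, leaving $\Psi\,(R_{12}R_{13}R_{23})\,\Psi$ with $\Psi=\phi^{-(1)}\phi^{(3)}$, and symmetrically $\Psi\,(R_{23}R_{13}R_{12})\,\Psi$ on the other side, so the Yang--Baxter relation for $R$ transfers to $\widehat R$. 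Your closing remark is also fair: a uniform choice of factorisation leaves uncancelled factors unless one additionally invokes $\phi^{(i)}\phi^{(j)}R_{ij}=R_{ij}\phi^{(i)}\phi^{(j)}$, which your mixed choice avoids.
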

Clearly the Yang-Baxter maps  $R$ and ${\widehat R}$ are not $YB$ equivalent. Hence, given a Yang-Baxter map and a symmetry of this map, one can introduce another  Yang-Baxter map not equivalent with the original. Note that the same holds true for $3D-$compatible maps.

\begin{definition}[\cite{et-2003,ABS:YB}]
A map $R: \mathbb{X} \times \mathbb{X}\ni(u,v)\mapsto (x,y) \in \mathbb{X} \times \mathbb{X}$   will be called quadrirational, if both the map $R$ and the so called {\em companion map} $cR: \mathbb{X} \times \mathbb{X}\ni(x,v)\mapsto (u,y) \in \mathbb{X} \times \mathbb{X},$ are birational maps.
\end{definition}
The notion of {\em quadrirational maps} first appeared in \cite{et-2003} under the name {\em non-degenerate rational maps}. Non-degenerate rational maps where renamed {\em quadrirational maps} in \cite{ABS:YB}.

\begin{remark}
The companion map of a quadrirational Yang-Baxter map is a 3D-compatible map.
The converse also holds i.e. the companion map of a quadrirational 3D-compatible map is a Yang-Baxter map.
\end{remark}

\begin{definition}[\cite{Veselov:2003b,Nijhoff:2002}]
The matrix $L(x;\lambda)$ \begin{enumerate}
\item is called a Lax matrix of the Yang-Baxter map $R: (u,v)\mapsto (x,y),$  if the relation
$
R(u,v)=(x,y)
$
implies that
$
L(u;\lambda)L(v;\lambda)=L(y;\lambda)L(x;\lambda)
$
 for all $\lambda$; 
\item is called a Lax matrix of the  companion map $cR: (x,v)\mapsto (u,y),$  if the relation
$
cR(x,v)=(u,y)
$
implies that
$
L(u;\lambda)L(v;\lambda)=L(y;\lambda)L(x;\lambda)
$
 for all $\lambda$. $L(x;\lambda)$ is called a strong Lax matrix of $cR$ if the converse also holds.
\end{enumerate}
\end{definition}

For the rest of the article we consider the set $\mathbb{X}$ to be  $\mathbb{D}\times \mathbb{D},$ where $\mathbb{D}$ is a non-commutative division ring, i.e.  an associative algebra
with a multiplicative identity
element denoted by $1$ and every non-zero element $x$ of $\mathbb{D}$ has a unique multiplicative inverse denoted
by $x^{-1}$ s.t. $x x^{-1} = x^{-1} x = 1$.

\subsection{Non-abelian extension of the Adler map}

A prototypical example of a Yang-Baxter map on $\mathbb{CP}^1\times\mathbb{CP}^1$ is the so-called Adler map (or $H_V$) 
 that was introduced in \cite{Adler:1993}. The Adler map reads:
\begin{align*}
H_V:(u,v)\mapsto (x,y)=\left(v-\frac{p-q}{u+v},u+\frac{p-q}{u+v}\right).
\end{align*}
For the rest of this Section and in the Propositions that follow, we extend the Adler map and its companion on $\mathbb{X}\times\mathbb{X},$ where $\mathbb{X}:=\mathbb{D}\times\mathbb{D}.$

\begin{prop} \label{prop01}
 The map  $
c\mathcal{H}_V: (x^1,x^2,v^1,v^2)\mapsto (u^1,u^2,y^1,y^2),
$
where
\begin{align*}
u^1=\left(v^1(v^1-x^1)+v^2v^1-x^2x^1\right) \left(x^1-v^1\right)^{-1},& &
u^2=(v^1-x^1)x^2x^1\left((v^1+x^2)x^1-(v^1+v^2)v^1\right)^{-1},\\
y^1=\left(x^1(x^1-x^1)+x^2x^1-v^2v^1\right) \left(v^1-x^1\right)^{-1},& &
y^2=(x^1-v^1)v^2v^1\left((x^1+v^2)v^1-(x^1+x^2)x^1\right)^{-1},
\end{align*}
\begin{enumerate}
\item has as symmetry the bijection $\phi: (z^1,z^2)\mapsto (-z^1,-z^2)$;
\item
has as strong Lax matrix
the matrix
\begin{align} \label{Lax_Hv}
L(x^1,x^2;\lambda)=
\begin{pmatrix}
x^1&(x^1+x^2)x^1-\lambda\\
1&x^1
\end{pmatrix},
\end{align}
where we assume that the spectral parameter $\lambda$  belongs to the center $C(\mathbb{D})$ of the algebra $\mathbb{D},$ i.e. it commutes with every element of $\mathbb{D}$;
\item  it is a  $3D-$compatible map.
\end{enumerate}
\end{prop}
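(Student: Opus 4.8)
The plan is to verify the three claims in turn, starting from the Lax matrix (\ref{Lax_Hv}), since claim (ii) does most of the work and claims (i), (iii) then follow with comparatively little effort.

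\textbf{Claim (ii): the strong Lax matrix.} First I would write down the refactorization equation $L(u^1,u^2;\lambda)L(v^1,v^2;\lambda)=L(y^1,y^2;\lambda)L(x^1,x^2;\lambda)$ with $L$ as in (\ref{Lax_Hv}), and expand both $2\times 2$ matrix products entrywise, being careful never to commute factors in $\mathbb{D}$. This produces four non-abelian equations in the eight quantities $u^1,u^2,v^1,v^2,x^1,x^2,y^1,y^2$; the $(2,2)$ and $(2,1)$ entries give the simplest relations (e.g. a linear relation among $u^1,v^1,y^1,x^1$ coming from comparing the lower row), which I would solve for the ``output'' variables $u^1,u^2$ and $y^1,y^2$ in terms of the ``input'' variables $x^1,x^2,v^1,v^2$. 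The claim that $L$ is a \emph{strong} Lax matrix of the companion map means exactly that this system of four matrix equations is equivalent to the map $c\mathcal{H}_V$, so I need both directions: (a) if $c\mathcal{H}_V(x^1,x^2,v^1,v^2)=(u^1,u^2,y^1,y^2)$ then the refactorization holds — this is a direct substitution and simplification; and (b) conversely, any solution of the refactorization equation with $x^1\neq v^1$ (so that the inverses appearing in the formulas exist) is given by the stated formulas — this amounts to checking that the four entrywise equations determine $u^1,u^2,y^1,y^2$ uniquely and recovering precisely the displayed rational expressions. The main obstacle will be organizing this non-abelian elimination cleanly: one must track the order of multiplication throughout, and the inverses $(x^1-v^1)^{-1}$, $((v^1+x^2)x^1-(v^1+v^2)v^1)^{-1}$ etc. must be introduced in exactly the right position. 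I expect the determinant-like quantity $(x^1+x^2)x^1-\lambda$ and its companion to play the role of a (non-commutative) ``conserved'' combination that makes the bookkeeping tractable.

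\textbf{Claim (i): the symmetry.} With $\phi(z^1,z^2)=(-z^1,-z^2)$, I would check directly that $L(-x^1,-x^2;\lambda)$ equals $D\,L(x^1,x^2;\lambda)\,D$ up to sign, where $D=\mathrm{diag}(1,-1)$ (indeed the off-diagonal entry $(x^1+x^2)x^1$ is invariant under $z\mapsto -z$, while the diagonal entries change sign), so $\phi$ induces a gauge symmetry of the refactorization equation; alternatively, and more elementarily, I would just substitute $x^i\mapsto -x^i$, $v^i\mapsto -v^i$ into the explicit formulas for $u^1,u^2,y^1,y^2$ and observe that each output picks up an overall sign, because every term in every numerator and denominator is homogeneous of the same total degree in the starred variables. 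Either route shows $(\phi\times\phi)\,c\mathcal{H}_V=c\mathcal{H}_V\,(\phi\times\phi)$, which is the definition of a symmetry.

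\textbf{Claim (iii): 3D-compatibility.} Here I would invoke the Remark in the excerpt: the companion map of a quadrirational Yang-Baxter map is 3D-compatible, and conversely the companion map of a quadrirational 3D-compatible map is a Yang-Baxter map. Since $L$ is a (strong) Lax matrix of $c\mathcal{H}_V$, the standard Lax-pair argument shows that the \emph{original} map $\mathcal{H}_V$ (whose Lax matrix $L$ likewise is, by part 1 of the Lax-matrix definition) satisfies the Yang-Baxter relation: composing refactorizations $L_1L_2L_3 = \cdots$ in the two orders $R_{12}R_{13}R_{23}$ and $R_{23}R_{13}R_{12}$ yields the same reordering of $L_1L_2L_3$, and uniqueness of the factorization (the strong property) forces the two compositions to agree. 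Hence $\mathcal{H}_V$ is a Yang-Baxter map; checking that $c\mathcal{H}_V$ is birational (its formulas are manifestly rational, and inverting — which is again the map $\mathcal{H}_V$ read the other way — is rational) shows $c\mathcal{H}_V$ is quadrirational, and then the Remark gives that $c\mathcal{H}_V$ is 3D-compatible. The only point needing care is the genericity/invertibility hypothesis $x^1\neq v^1$ under which all the inverses make sense; I would state this as the domain on which the claims hold.
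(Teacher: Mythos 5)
Your treatment of claims (1) and (2) matches the paper's proof in all essentials: the paper expands the refactorization $L(u^1,u^2;\lambda)L(v^1,v^2;\lambda)=L(y^1,y^2;\lambda)L(x^1,x^2;\lambda)$ entrywise into the four conditions (\ref{eq1})--(\ref{eq4}), notes that the two coming from the bottom row are linear in $u^1,y^1$ and free of $u^2,y^2$, solves those first (using the $(u,v)\leftrightarrow(y,x)$ symmetry of the system to get $y^1$ from $u^1$), substitutes into the remaining two to obtain $u^2,y^2$, and concludes strength from uniqueness of this solution. Your two routes to the symmetry are both sound: the gauge identity $L(-x^1,-x^2;\lambda)=-D\,L(x^1,x^2;\lambda)\,D$ with $D=\mathrm{diag}(1,-1)$ does hold, and in each defining formula the numerator and denominator degrees differ by an odd number, so every output flips sign; the paper simply does the direct check.

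The gap is in claim (3). You route the argument through the map $\mathcal{H}_V$: strong Lax $\Rightarrow$ $\mathcal{H}_V$ is a Yang-Baxter map $\Rightarrow$ $c\mathcal{H}_V$ is quadrirational $\Rightarrow$ 3D-compatibility by the Remark. But at this point $\mathcal{H}_V$ is not yet available as a rational map: in the non-abelian setting the companion of $c\mathcal{H}_V$ is constructed only in Proposition \ref{prop2}, and only under the additional centrality assumption $x^2x^1,\,v^2v^1\in C(\mathbb{D})$ --- without it one cannot trade $x^2x^1$ for $u^2u^1$ when solving the first defining relation for $x^1$, so quadrirationality of $c\mathcal{H}_V$ is not established in the generality of Proposition \ref{prop01}. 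Relatedly, the uniqueness you proved in claim (2) is uniqueness of $(u,y)$ given $(x,v)$; a Lax proof of the Yang-Baxter property for $\mathcal{H}_V$ would require the other uniqueness statement (of $(x,y)$ given $(u,v)$), which you have not shown, and your remark that inverting $c\mathcal{H}_V$ gives ``$\mathcal{H}_V$ read the other way'' conflates the inverse of a map with its companion. The repair is to use the strong Lax property in the form adapted to the companion map directly: applying uniqueness of the refactorization $(x,v)\mapsto(u,y)$ to a triple product $L(\cdot)L(\cdot)L(\cdot)$ yields $u^i_{jk}=u^i_{kj}$, i.e.\ 3D-compatibility of $c\mathcal{H}_V$, with no reference to $\mathcal{H}_V$ or to quadrirationality. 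This is precisely the paper's stated alternative (citing \cite{Nijhoff:2002,bs:2002N}) to a direct computation.
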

\begin{proof}
Is is easy to verify that $(\phi\times\phi)\,cH_V=cH_V\,(\phi\times\phi)$ and that proves that the bijection $\phi$ is a symmetry of $c\mathcal{H}_V$.
As a consequence,  the map $(u^i,v^i,x^i,y^i)\mapsto (-u^i,-v^i,-x^i,-y^i),$ $\forall i\in\{1,2\},$  leaves invariant the compatibility conditions (\ref{eq1})-(\ref{eq4}).

Let us now prove item $(2).$
From the Lax equation $L(u^1,u^2;\lambda)L(v^1,v^2;\lambda)=L(y^1,y^2;\lambda)L(x^1,x^2;\lambda),$ we obtain the following compatibility conditions
\begin{flalign} \label{eq1}
&u^1+v^1=y^1+x^1,\\ \label{eq2}
&u^1v^1+(v^1+v^2)v^1=y^1x^1+(x^1+x^2)x^1,\\ \label{eq3}
&(u^1+u^2)u^1+u^1v^1=(y^1+y^2)y^1+y^1x^1,\\ \label{eq4}
&(u^1+u^2)u^1v^1+u^1(v^1+v^2)v^1=(y^1+y^2)y^1x^1+y^1(x^1+x^2)x^1.
\end{flalign}
First note that the compatibility conditions (\ref{eq1})-(\ref{eq4}) are symmetric under the interchange
\begin{flalign}\label{sym01}
&(u^1,u^2,v^1,v^2)\leftrightarrow (y^1,y^2,x^1,x^2).
\end{flalign}
 Equations (\ref{eq1}), (\ref{eq2}) are linear in  $u^1,y^1,$ and do not include $u^2,y^2,$  so the latter can be easily solved for $u^1,y^1.$ Specifically, by eliminating $y^1$ from (\ref{eq1}), (\ref{eq2}), we obtain
 \begin{flalign}\label{eqq1}
 &u^1=\left(v^1(v^1-x^1)+v^2v^1-x^2x^1\right) \left(x^1-v^1\right)^{-1}.
  \end{flalign}
Applying (\ref{sym01}) to (\ref{eqq1}) we get
 \begin{flalign}\label{eqq2}
 &y^1=\left(x^1(x^1-v^1)+x^2x^1-v^2v^1\right) \left(v^1-x^1\right)^{-1}.
  \end{flalign}
  Substituting (\ref{eqq1}),(\ref{eqq2}) to (\ref{eq3}),(\ref{eq4}) and by solving them we obtain
 \begin{flalign}\label{eQ3}
 &y^2=(x^1-v^1)v^2v^1\left(x^1(v^1-x^1)+v^2v^1-x^2x^1\right)^{-1},
  \end{flalign}
 \begin{flalign}\label{eQ4}
 &u^2=(v^1-x^1)x^2x^1\left(v^1(x^1-v^1)+x^2x^1-v^2v^1\right)^{-1}.
  \end{flalign}
Equations (\ref{eqq1}),(\ref{eqq2}), (\ref{eQ3}) and (\ref{eQ4}), coincide with the defining relations of the map $c\mathcal{H}_V$ of the proposition. Moreover,  (\ref{eqq1}),(\ref{eqq2}), (\ref{eQ3}) and (\ref{eQ4}) is the unique solution of the compatibility conditions (\ref{eq1})-(\ref{eq4}), so the Lax matrix $L(x^1,x^2;\lambda)$ is strong.

 The $3D-$compatibility of $c\mathcal{H}_V$ can be proven by direct computation. Alternatively, by using the fact that $L(x^1,x^2;\lambda)$ is strong, from \cite{Nijhoff:2002,bs:2002N} the $3D-$compatibility follows.
\end{proof}

In the commutative setting where all variables are considered elements of the center of the algebra $\mathbb{D},$  from the defining relations of $c\mathcal{H}_V$  we obtain that
\begin{flalign*}
&u^2u^1=x^2x^1,\quad y^2y^1=v^2v^1,
\end{flalign*}
so the products $x^2x^1$ and $v^2v^1$ are invariants of $cH_V$. Clearly the latter are no longer invariants of the map in the non-commutative setting. Nevertheless, if we assume that the products $x^2x^1$ and $v^2v^1$ belong to the center of the algebra $\mathbb{D},$ i.e. they commute with all elements of $\mathbb{D},$ then these products are invariants of the map. This assumption is referred to as {\em centrality assumption} and it was firstly introduced in \cite{Doliwa_2013,Doliwa_2014} where it played an essential role in obtaining the companion map of the so-called {\em N-periodic reduction of the KP map}.

From further on,  when we refer to the centrality assumption for a map  $F: (z^1,z^2,w^1,w^2) \mapsto (\bar z^{1},\bar z^{2},\bar w^{1},\bar w^{2})$ we refer to
\begin{align}\label{cent_ass}
z^2z^1=p\in C(\mathbb{D}),\quad w^2w^1=q\in C(\mathbb{D}),
\end{align}
where $C(\mathbb{D})$ the center of the algebra $\mathbb{D}.$ Note that as a consequence of (\ref{cent_ass}), we have the commutativity relations $z^2z^1=z^1z^2,$ $w^2w^1=w^1w^2.$ 

\begin{prop} \label{prop2}
Under the centrality assumption, the map  $cH_V$ of Proposition \ref{prop01} is quadrirational with  companion map that reads  $\mathcal{H}_V:(u^1,u^2,v^1,v^2)\mapsto (x^1,x^2,y^1,y^2),$ where
\begin{flalign*}
x^1=(u^1+v^1)^{-1}\left((u^1+v^1)v^1+v^2v^1-u^2u^1\right),& &
x^2=u^2u^1\left((u^1+v^1)v^1+v^2v^1-u^2u^1\right)^{-1}(u^1+v^1),\\
y^1=(u^1+v^1)^{-1}\left((u^1+v^1)u^1+u^2u^1-v^2v^1\right),& &
y^2=v^2v^1\left((u^1+v^1)u^1+u^1u^1-v^2v^1\right)^{-1}(u^1+v^1).
\end{flalign*}
The map $\mathcal{H}_V$ is a Yang-Baxter map.
\end{prop}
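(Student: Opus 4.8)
The plan is to split the statement into two parts: (i) under the centrality assumption, $c\mathcal{H}_V$ is quadrirational and its companion map is precisely the rational map $\mathcal{H}_V$ displayed above; (ii) $\mathcal{H}_V$ is a Yang-Baxter map. Part (ii) will then follow at once from part (i) together with Proposition~\ref{prop01}, since passing to the companion and invoking the Remark that the companion of a quadrirational $3D$-compatible map is a Yang-Baxter map does all the work; the genuine content is in part (i).

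For part (i) I would work directly with the Lax compatibility conditions (\ref{eq1})--(\ref{eq4}), which are equivalent to the defining relations of $c\mathcal{H}_V$, and solve them for $(x^1,x^2,y^1,y^2)$ treating $(u^1,u^2,v^1,v^2)$ as given --- this is exactly the construction of the companion map. Imposing the centrality assumption $u^2u^1=p$, $v^2v^1=q$ with $p,q\in C(\mathbb{D})$, one rewrites $(u^1+u^2)u^1=u^1u^1+p$ and $(v^1+v^2)v^1=v^1v^1+q$; then (\ref{eq1}) together with (\ref{eq3}) becomes affine in $x^1$ and $y^1$ and can be solved for them, after which substitution into (\ref{eq2}) and (\ref{eq4}) yields $x^2$ and $y^2$, and in the course of the computation one verifies the consistency relations $x^2x^1=p$, $y^2y^1=q$ (so the centrality locus is preserved). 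Collecting the expressions should reproduce the formulas for $\mathcal{H}_V$ in the statement, and uniqueness of the solution --- argued exactly as the uniqueness assertion in the proof of Proposition~\ref{prop01}, using that (\ref{eq1}),(\ref{eq3}) are affine in the unknowns --- shows $\mathcal{H}_V$ is a well-defined birational map. Since its companion $c\mathcal{H}_V$ is birational as well (Proposition~\ref{prop01}), and the inverse maps are rational too --- for $\mathcal{H}_V^{-1}$ one may use that (\ref{eq1})--(\ref{eq4}) are invariant under the involution (\ref{sym01}), which conjugates $\mathcal{H}_V$ into $\mathcal{H}_V^{-1}$ by the transposition of the two factors, and for $(c\mathcal{H}_V)^{-1}$ one solves (\ref{eq1})--(\ref{eq4}) in the remaining pair of variables --- the map $c\mathcal{H}_V$ is quadrirational with companion $\mathcal{H}_V$. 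Equivalently, part (i) can be settled by direct substitution: inserting the formulas for $\mathcal{H}_V$ into those of $c\mathcal{H}_V$ and simplifying by means of (\ref{cent_ass}) returns $(u^1,u^2,y^1,y^2)$.

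For part (ii), recall that by item (3) of Proposition~\ref{prop01} the map $c\mathcal{H}_V$ is $3D$-compatible, and by part (i) it is (under the centrality assumption) quadrirational. Hence, by the Remark stating that the companion of a quadrirational $3D$-compatible map is a Yang-Baxter map, the companion of $c\mathcal{H}_V$ is a Yang-Baxter map; since the companion of the companion is the original map, this companion is $\mathcal{H}_V$, and therefore $\mathcal{H}_V$ is a Yang-Baxter map. (Alternatively, since $L$ from (\ref{Lax_Hv}) is a Lax matrix for $\mathcal{H}_V$ as well, one could invoke the classical refactorisation argument, but the route through the Remark uses only facts already recorded above.)

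The step I expect to be the main obstacle is the non-commutative elimination in part (i): because $\mathbb{D}$ is non-commutative, the order of every factor in (\ref{eq1})--(\ref{eq4}) must be preserved throughout the elimination, the centrality of $p=u^2u^1$ and $q=v^2v^1$ --- and the ensuing commutativity relations $u^2u^1=u^1u^2$, $v^2v^1=v^1v^2$ noted after (\ref{cent_ass}) --- must be used at precisely the right places both for the system to close and for the identities $x^2x^1=p$, $y^2y^1=q$ to hold, and one must recognise the resulting deliberately left/right-ordered expressions as the stated formulas for $\mathcal{H}_V$. Everything after that is formal.
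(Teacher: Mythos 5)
Your proposal is correct and takes essentially the same route as the paper: the companion map is obtained by solving the Lax compatibility system (equivalently, the defining relations of $c\mathcal{H}_V$) for $(x^1,x^2,y^1,y^2)$ with the centrality relations $x^2x^1=u^2u^1$, $y^2y^1=v^2v^1$ inserted at the same key points, and the Yang--Baxter property is then deduced exactly as in the paper from $\mathcal{H}_V$ being the companion of the $3D$-compatible map $c\mathcal{H}_V$. The only differences are cosmetic (which pair of equations you eliminate first, and your slightly more explicit remarks on birationality), and your orderings of the non-commutative factors agree with the stated formulas because $p-q\in C(\mathbb{D})$.
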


\begin{proof}
Note that under the centrality assumption (\ref{cent_ass}), from the map  $c\mathcal{H}_V$ of Proposition \ref{prop01} we obtain
\begin{align} \label{cent_ass2}
x^2x^1=u^2u^1=p\in C(\mathbb{D}),\quad y^2y^1=v^2v^1=q\in C(\mathbb{D}).
\end{align}

The first defining relation of $c\mathcal{H}_V$ of Proposition \ref{prop01} reads
\begin{align*}
u^1=-v^1+(v^2v^1-x^2x^1)(x^1-v^1)^{-1},
\end{align*}
  by using  $x^2x^1=u^2u^1$ from (\ref{cent_ass2}), we can solve for $x^1$ to obtain
\begin{align}\label{peq1}
x^1=(u^1+v^1)^{-1}\left((u^1+v^1)v^1+v^2v^1-u^2u^1\right),
\end{align}
namely the first defining relation of $\mathcal{H}_V$ mapping. Now we can substitute (\ref{peq1}) to the second  defining relation of $c\mathcal{H}_V$ and solve for $x^2$ in terms of $u^i,v^i$, $i=1,2,$ or equivalently from (\ref{eq2}) of the compatibility conditions, by using again $x^2x^1=u^2u^1,$ we obtain the second defining relation of $\mathcal{H}_V$ mapping, namely
\begin{align}\label{peq2}
x^2=u^2u^1\left((u^1+v^1)v^1+v^2v^1-u^2u^1\right)^{-1}(u^1+v^1).
\end{align}
Now we substitute (\ref{peq1}) and (\ref{peq2}) to the third and forth defining relation of $c\mathcal{H}_V$ of Proposition \ref{prop01}, to obtain
\begin{align} \label{peq3}
y^1=(u^1+v^1)^{-1}\left((u^1+v^1)u^1+u^2u^1-v^2v^1\right),\\ \label{peq4}
y^2=v^2v^1\left((u^1+v^1)u^1+u^1u^1-v^2v^1\right)^{-1}(u^1+v^1).
\end{align}
 (\ref{peq1})-(\ref{peq4}) constitute the defining relations of $H_V$ mapping and that completes the first part of the proof.

The proof that $\mathcal{H}_V$ is a Yang-Baxter map follows from the fact that it is the companion of a $3D-$compatible map.

\end{proof}
The $\mathcal{H}_V$ map  serves as the non-abelian form of the  Adler map $(H_V)$. This is apparent since under the change of variables $(u,p,v,q)=(u^1,u^1u^2,v^1,v^1v^2),$  $\mathcal{H}_V$ reads
$$
\mathcal{H}_V:(u,p,v,q)\mapsto (x,p,y,q),
$$
where
\begin{align*}
x=v+(u+v)^{-1}(q-p),\quad
y=u+(u+v)^{-1}(p-q),
\end{align*}
that clearly coincides with the Adler map in the commutative case.

\section{Non-abelian extension of  quadrirational Yang-Baxter maps}  \label{sec3}

\begin{prop} \label{prop1}
Provided the centrality assumptions (\ref{cent_ass}), the map  $$
c\mathcal{K}_{a,b,c}: (x^1,x^2,v^1,v^2)\mapsto (u^1,u^2,y^1,y^2),
$$
where
\begin{align*}
u^1=(b-cv^1)(x^2-v^2)x^1\left(x^1-v^1\right)^{-1}\left(a-cv^2\right)^{-1},\\
u^2=(a-cv^2)(x^1-v^1)x^2\left(x^2-v^2\right)^{-1}\left(b-cv^1\right)^{-1},\\
y^1=(b-cx^1)(x^2-v^2)v^1\left(x^1-v^1\right)^{-1}\left(a-cx^2\right)^{-1},\\
y^2=(a-cx^2)(x^1-v^1)v^2\left(x^2-v^2\right)^{-1}\left(b-cx^1\right)^{-1},
\end{align*}
with $a,b,c\in C(\mathbb{D})$ and neither $a,c$ nor $b,c$  simultaneously zero,
\begin{enumerate}
\item has as symmetries the bijections
\begin{align}
&\psi: (z^1,z^2)\mapsto (\frac{b}{a}z^2,\frac{a}{b}z^1),\\
&\phi: (z^1,z^2)\mapsto  \left(\frac{b}{a}(a-cz^2)z^1(cz^1-b)^{-1},\frac{a}{b}(b-cz^1)z^2(cz^2-a)^{-1}\right); 
\end{align}
\item
has as strong Lax matrix
the matrix
\begin{align} \label{Lax_K}
L(x^1,x^2;\lambda)=
\begin{pmatrix}
ax^1-cx^2x^1&\lambda(b-cx^1)\\
a-cx^2&bx^2-cx^1x^2\end{pmatrix},
\end{align}
where the spectral parameter $\lambda\in C(\mathbb{D})$;
\item it is a  $3D-$compatible map;
\item it is quadrirational and its companion map reads
\begin{align*}
\mathcal{K}_{a,b,c}: (u^1,u^2,v^1,v^2)\mapsto (x^1,x^2,y^1,y^2),
\end{align*}
where
\begin{align*}
&x^1=\left(au^1+bv^2-c(v^1v^2+u^1v^2) \right)^{-1}u^1 \left(av^1+bu^2-c(v^2v^1+u^2v^1) \right),\\
&x^2=\left(bu^2+av^1-c(v^2v^1+u^2v^1) \right)^{-1}u^2 \left(bv^2+au^1-c(v^1v^2+u^1v^2) \right),\\
&y^1=\left(au^1+bv^2-c(u^1u^2+u^1v^2) \right)v^1 \left(bu^2+av^1-c(u^2u^1+u^2v^1) \right)^{-1},\\
&y^2=\left(bu^2+av^1-c(u^2u^1+u^2v^1) \right)v^2 \left(au^1+bv^2-c(u^1u^2+u^1v^2) \right)^{-1}.
\end{align*}
\item The map $\mathcal{K}_{a,b,c}$ is a Yang-Baxter map.
\end{enumerate}
\end{prop}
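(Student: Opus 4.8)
The plan is to follow closely the scheme of the proofs of Propositions~\ref{prop01} and~\ref{prop2}, handling the five items in the order (1), (2), (4), (3), (5). For item~(1) I would verify directly the intertwining identities $(\psi\times\psi)\,c\mathcal{K}_{a,b,c}=c\mathcal{K}_{a,b,c}\,(\psi\times\psi)$ and $(\phi\times\phi)\,c\mathcal{K}_{a,b,c}=c\mathcal{K}_{a,b,c}\,(\phi\times\phi)$. Since $a,b,c\in C(\mathbb{D})$, each side is a finite non-commutative rational expression in the $x^i,v^i$, and after clearing central factors the identities reduce to checking that the stated formulae for $u^i,y^i$ are interchanged by the substitutions induced by $\psi$ and $\phi$; this is purely a bookkeeping check, carried out most cleanly using the commutativity $x^2x^1=x^1x^2$, $v^2v^1=v^1v^2$ supplied by~\eqref{cent_ass}.

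For item~(2) I would start from the Lax relation $L(u^1,u^2;\lambda)L(v^1,v^2;\lambda)=L(y^1,y^2;\lambda)L(x^1,x^2;\lambda)$ with $L$ as in~\eqref{Lax_K}. Abbreviating the entries of $L(z^1,z^2;\lambda)$ as $A_z=az^1-cz^2z^1$, $B_z=b-cz^1$, $C_z=a-cz^2$, $D_z=bz^2-cz^1z^2$, and using that $\lambda$ is central, one multiplies out both products and equates the coefficients of $\lambda^0$ and of $\lambda^1$ entrywise; this produces six compatibility conditions. Two of them, coming from the $\lambda^1$–parts of the off–diagonal entries, read $B_uC_v=B_yC_x$ and $C_uB_v=C_yB_x$ and are affine in the pairs $(u^1,y^1)$ and $(u^2,y^2)$ respectively. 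Under the centrality assumption the quadratic products collapse to the center — one checks $u^2u^1=x^2x^1$ and $y^2y^1=v^2v^1$ by the same computation that produced~\eqref{cent_ass2} — so the diagonal entries $A_z,D_z$ are affine as well, and the remaining equations are solved by the elimination used in Proposition~\ref{prop01}: eliminate $y^1$ to solve for $u^1$, use the symmetry $(u^1,u^2,v^1,v^2)\leftrightarrow(y^1,y^2,x^1,x^2)$ of the system (cf.~\eqref{sym01}) to get $y^1$, then substitute and solve the $\lambda^0$ equations for $u^2,y^2$. One then checks that the result coincides with the defining relations of $c\mathcal{K}_{a,b,c}$; since each elimination step yielded a unique solution, the converse implication also holds, so $L$ is a strong Lax matrix of $c\mathcal{K}_{a,b,c}$.

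Item~(4) follows exactly as in Proposition~\ref{prop2}: using $x^2x^1=u^2u^1$ and $v^2v^1=y^2y^1$, the first defining relation of $c\mathcal{K}_{a,b,c}$ can be re-solved for $x^1$ in terms of $u^1,v^1,v^2$, which gives the stated expression for $x^1$; feeding this into the second relation yields $x^2$, and into the third and fourth yields $y^1,y^2$, producing the companion map $\mathcal{K}_{a,b,c}$ listed in the proposition. Both $c\mathcal{K}_{a,b,c}$ and $\mathcal{K}_{a,b,c}$ are then manifestly birational, so $c\mathcal{K}_{a,b,c}$ is quadrirational. Item~(3) is then immediate: since $L$ is a strong Lax matrix of $c\mathcal{K}_{a,b,c}$, 3D–compatibility follows from the criterion of~\cite{Nijhoff:2002,bs:2002N} (or by substituting the defining relations into the 3D–compatibility equations). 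Finally, by items~(3) and~(4) the map $\mathcal{K}_{a,b,c}$ is the companion of a quadrirational 3D–compatible map, hence a Yang-Baxter map by the Remark asserting that the companion of a quadrirational 3D-compatible map is a Yang-Baxter map; this gives item~(5).

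The main obstacle I expect is entirely computational and is concentrated in item~(2). In the non-commutative setting the ``solve the affine subsystem'' steps amount to handling equations of the shape $\alpha\,u^1+u^1\,\beta=\gamma$ with $\alpha$ central and $\beta,\gamma$ built from $x^i,v^i$ and the constants $a,b,c$; such an equation is solved by $u^1=\gamma(\alpha+\beta)^{-1}$, and it is precisely this that forces the one-sided placement of the inverse factors $(x^1-v^1)^{-1}$, $(a-cv^2)^{-1}$, etc.\ in the stated formulae. The real content is in checking that the over-determined system of six equations for the four unknowns $u^1,u^2,y^1,y^2$ is consistent, which works only because the centrality assumption~\eqref{cent_ass} makes $x^2x^1,v^2v^1$ (and, a posteriori, $u^2u^1,y^2y^1$) central; once this is in hand the remaining items are essentially formal.
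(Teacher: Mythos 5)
Your proposal is correct in substance, and for items (1), (2) and (4) it follows essentially the paper's own argument: direct verification of the intertwining relations for $\psi,\phi$; expanding the Lax equation, collecting the coefficients of $\lambda^0$ and $\lambda^1$ into six compatibility conditions, solving four of them uniquely for $u^1,u^2,y^1,y^2$ and checking that the remaining two hold under the centrality assumption (which is exactly where strongness of \eqref{Lax_K} comes from); and inverting the first relation with the help of $x^2x^1=u^2u^1$, $y^2y^1=v^2v^1$ to obtain the companion map. Where you genuinely diverge is in the order and mechanism for items (3) and (5). You prove (3) first, from the strong Lax matrix via the criterion of \cite{Nijhoff:2002,bs:2002N} (mirroring Proposition \ref{prop01}), and then deduce (5) from the remark that the companion of a quadrirational 3D-compatible map is a Yang--Baxter map (mirroring Proposition \ref{prop2}). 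The paper instead proves (5) \emph{first}, invoking the Suris--Veselov theorem \cite{Veselov:2003b}: it exhibits $\mathcal{K}_{a,b,c}$ explicitly as a projective (M\"obius-type) action of the Lax matrix, $x^1=[u^1]L(v^1,v^2,u^2u^1)$ etc., which is the hypothesis of that theorem, and then obtains (3) as a corollary because $c\mathcal{K}_{a,b,c}$ is the companion of a Yang--Baxter map. Both routes are legitimate given the tools the paper has already set up; the paper's has the advantage of being more self-contained (the projective-action form is verified explicitly rather than resting on the unproved companion-map remark), while yours is more uniform with Section \ref{sec2}. One small bookkeeping slip: the relations $B_uC_v=B_yC_x$ and $C_uB_v=C_yB_x$ (i.e.\ \eqref{IIeq1}, \eqref{IIeq2}) arise as the $\lambda$-coefficients of the \emph{diagonal} entries of the Lax equation, not of the off-diagonal ones (the $(1,2)$ entry is entirely proportional to $\lambda$ and yields \eqref{eqII6}); this does not affect the argument.
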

\begin{proof}

First note that as a consequence of the centrality assumption (\ref{cent_ass}), from the map  $c\mathcal{K}_{a,b,c}$  we obtain
\begin{align} \label{cent_ass3}
x^2x^1=u^2u^1=p\in C(\mathbb{D}),\quad y^2y^1=v^2v^1=q\in C(\mathbb{D}).
\end{align}

One can verify  that $(\phi\times\phi)\,c\mathcal{K}_{a,b,c}=c\mathcal{K}_{a,b,c}\,(\phi\times\phi),$ as well as $(\psi\times\psi)\,c\mathcal{K}_{a,b,c}=c\mathcal{K}_{a,b,c}\,(\psi\times\psi)$  and that proves that the bijections $\phi$ and $\psi$ are a symmetries of $c\mathcal{K}_{a,b,c}$. It can also easily shown that $\psi^2=\phi^2=id,$ so $\psi^{-1}=\psi,$ $\phi^{-1}=\phi,$ provided  (\ref{cent_ass3}) holds.

From the Lax equation $L(u^1,u^2;\lambda)L(v^1,v^2;\lambda)=L(y^1,y^2;\lambda)L(x^1,x^2;\lambda),$ we obtain the following compatibility conditions
\begin{align} \label{IIeq1}
&(b-cu^1)(a-cv^2)=(b-cy^1)(a-cx^2),\\ \label{IIeq2}
&(a-cu^2)(b-cv^1)=(a-cy^2)(b-cx^1),\\ \label{IIeq3}
&(b-cu^1)u^2(b-cv^1)v^2=(b-cy^1)y^2(b-cx^1)x^2,\\ \label{eqII4}
&(a-cu^2)u^1(a-cv^2)v^1=(a-cy^2)y^1(a-cx^2)x^1,\\ \label{eqII5}
&(b-cu^1)u^2(a-cv^2)+(a-cu^2)(a-cv^2)v^1=(b-cy^1)y^2(a-cx^2)+(a-cy^2)(a-cx^2)x^1,\\ \label{eqII6}
&(a-cu^2)u^1(b-cv^1)+(b-cu^1)(b-cv^1)v^2=(a-cy^2)y^1(b-cx^1)+(b-cy^1)(b-cx^1)x^2.
\end{align}
The system of equations (\ref{IIeq1})-(\ref{eqII4}) 
determines uniquely $u^i,y^i$ as functions of $v^i,x^i,$ $i=1,2$ i.e. the defining relations of the map $c\mathcal{K}_{a,b,c}$ of this Proposition. Then,  substituting $c\mathcal{K}_{a,b,c}$ into (\ref{eqII5}),(\ref{eqII6}), the latter are satisfied provided  (\ref{cent_ass3}) holds. So $c\mathcal{K}_{a,b,c}$ is uniquely determined by  (\ref{IIeq1})-(\ref{eqII6}) and  that proves that (\ref{Lax_K}) serves as a strong Lax matrix of $c\mathcal{K}_{a,b,c}$.

Using \cite{Veselov:2003b}, the proof that $\mathcal{K}_{a,b,c}$ is a Yang-Baxter map, follows directly  from its Lax representation  and the fact that it can be written as the  projective action
\begin{align*}
x^1=[u^1]L(v^1,v^2,u^2u^1),\;\;x^2=[u^2]L(v^1,v^2,u^1u^2),\;\;y^1=L(u^1,u^2,v^2v^1)[v^1],\;\;y^2=L(u^1,u^2,v^2v^1)[v^2],
\end{align*}
where 
\begin{align*}
\begin{pmatrix}
a&b\\
c&d\end{pmatrix}[x]:=(ax+b)(cx+d)^{-1},\quad [x]\begin{pmatrix}
a&b\\
c&d\end{pmatrix}:=(x c+d)^{-1}(x a+b).
\end{align*}
Then the $3D-$compatibility of $c\mathcal{K}_{a,b,c}$  follows, since it serves as the companion map of a Yang-Baxter map.

\end{proof}

The following remarks are in order.
\begin{itemize}
\item For $a,b,c\geq 0,$ the map $\mathcal{K}_{a,b,-c},$ is a totaly positive Yang-Baxter map.
\item Under the change of variables $(u,p,v,q)=(u^1,u^1u^2,v^1,v^1v^2),$ the symmetries $\phi, \psi,$ obtain the form:
\begin{align} \label{sym1}
&\psi: (u,p)\mapsto (\frac{b}{a}p u^{-1},p),\\ \label{sym2}
&\phi: (u,p)\mapsto  \left(\frac{b}{a}(a u-c p)(cu-b)^{-1},p\right),
\end{align}
also  $\mathcal{K}_{a,b,c}$  reads:
\begin{align} \label{ybK}
\mathcal{K}_{a,b,c}: (u,p,v,q)\mapsto (x,p,y,q),
\end{align}
 where
\begin{align*}
&x=v\left(a u v+bq-cq(u+v)\right)^{-1} \left(a u v+bp-c(qu+pv)\right),\\
&y=\left(auv+bq-c(pv+qu)\right)\left(auv+bp-cp(u+v)\right)^{-1}u.
\end{align*}
In the following Proposition, we use the symmetries $\phi$ and $\psi,$ in order to obtain three additional non $YB$ equivalent families of Yang-Baxter maps.
\end{itemize}

\begin{prop} \label{prop:t}
Let  $\mathcal{K}_{a,b,c}: (u,p,v,q)\mapsto (x,p,y,q),$ be the Yang-Baxter map given in (\ref{ybK}), with symmetries the bijections  $\psi,\phi$ given in (\ref{sym1}),(\ref{sym2}). The maps $\mathcal{\Lambda}_{a,b,c}:=(\psi^{-1}\times id)\,\mathcal{K}_{a,b,c}\; (id \times \psi),$ $\mathcal{H}_{a,b,c}:=(\phi^{-1}\times id)\,\mathcal{K}_{a,b,c}\; (id \times \phi),$ and $\mathcal{F}_{a,b,c}:=(\psi^{-1}\circ\phi^{-1}\times id)\,\mathcal{K}_{a,b,c}\; (id \times \phi\circ \psi),$ 
where
\begin{align*}
\begin{aligned}
&x=pv\left(ab(qu+pv)-cq(bp+auv)\right)^{-1}\left(ab(u+v)-c(bq+auv)\right),\\
&y=q\left(ab(u+v)-c(bp+auv)\right)\left(ab(qu+pv)-cp(bq+auv)\right)^{-1}u,
\end{aligned}&& (\mathcal{\Lambda}_{a,b,c})\\[7pt]
\begin{aligned}
&x=\left((auv-bq)(v-\frac{c}{a}q)^{-1}-(auv-bp)(v-\frac{b}{c})^{-1}\right)^{-1}\\
&\quad \qquad \left(p(auv-bq)(\frac{a}{c}v-q)^{-1}-(auv-bp)(\frac{c}{b}v-1)^{-1}\right),\\
&y=\left(a (ab-c^2q)uv+abc(q-p)v+bq(c^2p-ab)\right)\\
  &\qquad\left(a (ab-c^2p)uv+abc(p-q)u+bp(c^2q-ab)\right)^{-1}u,
\end{aligned}&& (\mathcal{H}_{a,b,c})\\[7pt]
\begin{aligned}
&x=p\left(c p(u-v)(b-cv)^{-1}-a(qu-pv)(cq-av)^{-1}\right)^{-1}\\
  &\quad \qquad  \left(b(u-v)(b-cv)^{-1}-c(qu-pv)(cq-av)^{-1}\right),\\
&y=q\left((ab-c^2q)u+bc(q-p)+(c^2p-ab)v\right)\\
 &\quad \qquad\left(q(ab-c^2p)u+ac(p-q)uv+p(c^2q-ab)v\right)^{-1}u,
\end{aligned}&& (\mathcal{F}_{a,b,c})\\
\end{align*}
are non-abelian quadrirational Yang-Baxter maps.
\end{prop}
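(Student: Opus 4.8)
The plan is to get the Yang--Baxter property and quadrirationality of the three maps as formal consequences of Propositions \ref{prop1} and \ref{prop:sec2:1}, and to reduce checking the displayed formulas to a finite computation in $\mathbb{D}$. First I would invoke Proposition \ref{prop1}: in the coordinates $(u,p,v,q)$ the map $\mathcal{K}_{a,b,c}$ of (\ref{ybK}) is a Yang--Baxter map, and the bijections $\psi,\phi$ of (\ref{sym1}), (\ref{sym2}) are symmetries of it, both involutions ($\psi^{2}=\phi^{2}=\mathrm{id}$). Since the composition of two symmetries of a given map is again a symmetry, $\phi\circ\psi$ is a symmetry of $\mathcal{K}_{a,b,c}$ as well, with inverse $\psi^{-1}\circ\phi^{-1}$. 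Consequently each of $\mathcal{\Lambda}_{a,b,c}$, $\mathcal{H}_{a,b,c}$, $\mathcal{F}_{a,b,c}$ is exactly of the form $(\sigma^{-1}\times\mathrm{id})\,\mathcal{K}_{a,b,c}\,(\mathrm{id}\times\sigma)$ with $\sigma$ equal to $\psi$, $\phi$, $\phi\circ\psi$ respectively, so Proposition \ref{prop:sec2:1} applies without change and shows that all three are Yang--Baxter maps. I would also note that $\psi$ and $\phi$ fix the ``$p$'' and ``$q$'' slots, so the centrality assumptions (\ref{cent_ass}) survive the twist and the three maps are defined on the same non-abelian set as $\mathcal{K}_{a,b,c}$.

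For quadrirationality I would treat the three cases uniformly. Writing $\widehat R=(\sigma^{-1}\times\mathrm{id})\,\mathcal{K}_{a,b,c}\,(\mathrm{id}\times\sigma)$ with $\sigma$ one of the above involutions, $\widehat R$ is a composition of birational maps, since $\mathcal{K}_{a,b,c}$ is birational by Proposition \ref{prop1}(4) and $\sigma$ is given by rational formulas with rational inverse; and a short chase through the definitions gives $c\widehat R = c\mathcal{K}_{a,b,c}\circ(\sigma\times\sigma)$ for the companion map, which is birational because $c\mathcal{K}_{a,b,c}$ is birational by Proposition \ref{prop1}(4). Hence $\widehat R$ and its companion are both birational, i.e. $\widehat R$ is quadrirational; specialising $\sigma$ to $\psi$, $\phi$ and $\phi\circ\psi$ finishes this part.

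The only remaining task is to match the displayed rational expressions with these definitions. Using $\psi^{-1}=\psi$ and $\phi^{-1}=\phi$, I would substitute the formulas (\ref{ybK}) for $\mathcal{K}_{a,b,c}$ into $(\psi\times\mathrm{id})\,\mathcal{K}_{a,b,c}\,(\mathrm{id}\times\psi)$, $(\phi\times\mathrm{id})\,\mathcal{K}_{a,b,c}\,(\mathrm{id}\times\phi)$ and $((\phi\circ\psi)^{-1}\times\mathrm{id})\,\mathcal{K}_{a,b,c}\,(\mathrm{id}\times(\phi\circ\psi))$, and simplify the resulting nested non-commutative rational expressions using only the division-ring axioms and the centrality of $p,q,a,b,c$. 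I expect this to be the main obstacle, purely as bookkeeping: for $\mathcal{\Lambda}_{a,b,c}$ it is short, since $\psi$ only interchanges $u$ and $p$ up to central scalars, but for $\mathcal{H}_{a,b,c}$ and $\mathcal{F}_{a,b,c}$ the map $\phi$ is a genuine M\"obius-type substitution in the first slot, and one must keep the order of every factor and the side of every inverse intact. I would carry this out (and have it checked) with a computer-algebra system that handles non-commuting indeterminates, and then read off that the outputs are precisely the three blocks of formulas in the statement.
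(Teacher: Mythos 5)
Your proposal is correct and follows essentially the same route as the paper, whose proof is precisely the one-line invocation of Proposition \ref{prop:sec2:1} applied to $\mathcal{K}_{a,b,c}$ with the symmetries $\psi$, $\phi$ and their composition. The additional details you supply (that a composition of symmetries is a symmetry, the identity $c\widehat R=c\mathcal{K}_{a,b,c}\circ(\sigma\times\sigma)$ giving quadrirationality, and the deferral of the explicit formula-matching to a non-commutative computation) are all sound and merely make explicit what the paper leaves implicit.
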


\begin{proof}
The proof follows  directly by applying Proposition \ref{prop:sec2:1} to the map  $\mathcal{K}_{a,b,c}$.
\end{proof}

In the abelian setting and for generic $a,b,c$, it holds  that $\mathcal{H}_{a,b,c}$ is related to $\mathcal{K}_{a,b,c}$ through the conjugation $\chi:z\mapsto (1-z)^{-1}$ followed by $(p,q)\mapsto ((1-p)^{-1},(1-q)^{-1})$ \cite{Papageorgiou:2010}, i.e. $\mathcal{H}_{a,b,c}=\chi^{-1}\times\chi^{-1}\,\mathcal{K}_{a,b,c}\,\chi\times \chi.$ Also, $\mathcal{\Lambda}_{a,b,c}$ is related to $\mathcal{K}_{a,b,c}$ through the conjugation $\omega:z\mapsto z^{-1}$ followed by $(p,q)\mapsto (p^{-1},q^{-1}).$ So in the commutative setting essentially we have two non-equivalent families of Yang-Baxter maps, the family $\mathcal{H}_{a,b,c}$ and the family $\mathcal{F}_{a,b,c}$ which coincide with the families $H$ and $F$ in \cite{ABS:YB,Papageorgiou:2010}. In the non-abelian setting,  the families $\mathcal{K}_{a,b,c},$ $\mathcal{\Lambda}_{a,b,c}$ and $\mathcal{H}_{a,b,c}$ are no longer equivalent under conjugation. In the  commutative diagram of Figure \ref{fig1}, we present the four families of quadrirational Yang-Baxter maps $\mathcal{K}_{a,b,c},\mathcal{\Lambda}_{a,b,c},\mathcal{H}_{a,b,c},\mathcal{F}_{a,b,c}$ and their interrelations in the abelian and the non-abelian setting respectively, for generic $a,b$ and $c$.
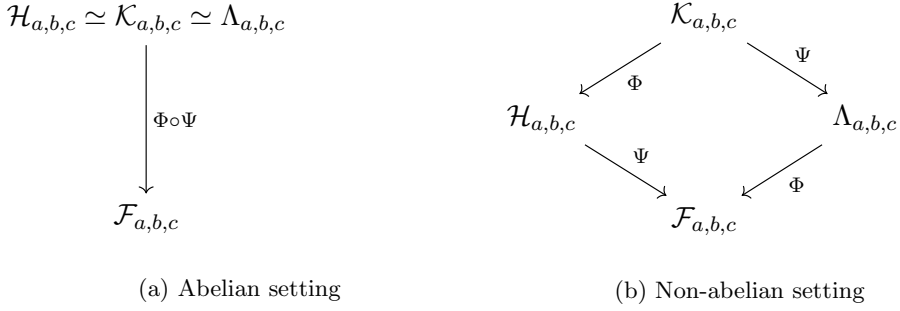
\begin{figure}[h]
\begin{center}
\begin{minipage}[htb]{0.4\textwidth}
\begin{tikzcd}[every arrow/.append style={},row sep=12ex]
\mathcal{H}_{a,b,c}\simeq \mathcal{K}_{a,b,c} \simeq \mathcal{\Lambda}_{a,b,c} \arrow{d}{\Phi\circ \Psi}\\
\mathcal{F}_{a,b,c}
\end{tikzcd}
\captionsetup{font=footnotesize}
\captionof*{figure}{(a) Abelian setting}
\end{minipage}
\begin{minipage}[htb]{0.4\textwidth}
\begin{tikzcd}[every arrow/.append style={}]
   &  \mathcal{K}_{a,b,c}\arrow{ld}{\Phi} \arrow{dr}{\Psi}  & \\
\mathcal{H}_{a,b,c} \arrow{dr}{\Psi} & & \mathcal{\Lambda}_{a,b,c}\arrow{dl}{\Phi}\\
    & \mathcal{F}_{a,b,c}  &
\end{tikzcd}
\captionsetup{font=footnotesize}
\captionof*{figure}{(b) Non-abelian setting}
\end{minipage}\
\caption{The four families of quadrirational Yang-Baxter maps in the  abelian and in the non-abelian setting. The morphisms $\Phi, \Psi$, are  respectively defined by $\Phi: R\rightarrow (\phi^{-1}\times id)R(id\times \phi)$ and $\Psi: R\rightarrow (\psi^{-1}\times id )R(id\times \psi),$ where $\phi,\psi$ the symmetries defined in (\ref{sym1}),(\ref{sym2}).} \label{fig1}
\end{center}
\end{figure}

As a final remark, note that for $a,b,c\geq 0,$ the map $\mathcal{\Lambda}_{a,b,-c},$ is a totaly positive Yang-Baxter~map.

\section{Conclusions}  \label{sec4}

In this article we used Lax formulation to introduce four lists of non-abelian quadrirational Yang-Baxter maps, by providing the canonical forms of the generic maps that correspond to each of these lists, namely the maps $\mathcal{F}_{1,1,1},$ $\mathcal{H}_{1,1,1},$ $\mathcal{K}_{1,1,1}$ and $\mathcal{\Lambda}_{1,1,1}.$ Various degenerations of these canonical maps led to the remaining members of the corresponding $\mathcal{F},$ $\mathcal{H},$ $\mathcal{K}$ and $\mathcal{\Lambda}$ lists of non-abelian Yang-Baxter maps presented in Appendix \ref{app1}.

In the case of {\em entwining Yang-Baxter maps} \cite{Kouloukas:2011}, one can combine the results of this article together with the results of \cite{Kassotakis:2019}, to obtain non-abelian entwining Yang-Baxter maps associated with each member (apart $\mathcal{F}_{IV}$) of the $\mathcal{F},$ $\mathcal{H},$ $\mathcal{K}$ and $\mathcal{\Lambda}$ lists.

Finally note that  the families of Yang-Baxter maps presented in this article serve as the lowest members of hierarchies of families of Yang-Baxter maps. For example the map $\mathcal{K}_{a,b,c}$  is  the companion map (for N=2)  of the following 3D-compatible hierarchy of maps \cite{Kassotakis:2021:p}:
\begin{align*}
c\mathcal{K}^N_{a^1,\ldots,a^N,c}: (x^1,\ldots,x^N,v^1,\ldots,v^N)\mapsto (u^1,\ldots,u^N,y^1,\ldots,y^N),
\end{align*}
where
\begin{align*}
\begin{aligned}
u^i=(a^i-cv^i)(x^{i-1}-v^{i-1})x^i\left(x^i-v^i\right)^{-1}\left(a^{i-1}-cv^{i-1}\right)^{-1},\\
y^i=(a^i-cx^i)(x^{i-1}-v^{i-1})v^i\left(x^i-v^i\right)^{-1}\left(a^{i-1}-cx^{i-1}\right)^{-1},
\end{aligned}&& i=1,\ldots,N,
\end{align*}
with $a^i,c\in C(\mathbb{D})$ and $a^i,c$ $\forall i\in\{1,\ldots,N\}$  not simultaneously zero and the index $i$ is considered modulo $N$. Note that 
\begin{align*}
\psi_{a^1,\ldots,a^N}: (z^1,\ldots,z^i,\ldots,z^N)\mapsto \left(\frac{a^1}{a^N}z^N,\ldots,\frac{a^{i}}{a^i-1}z^{i-1},\ldots,\frac{a^{N}}{a^{N-1}}z^{N-1}\right),
\end{align*}
is a symmetry of $c\mathcal{K}^N_{a^1,\ldots,a^N,c}.$
The hierarchy $c\mathcal{K}^N_{a^1,\ldots,a^N,c}$ with $c=0,$ and $a^i=1, \forall i,$ i.e.  $c\mathcal{K}^N_{1,\ldots,1,0},$ coincides with the so-called $N-$periodic reduction of the KP map and it was firstly considered in \cite{Doliwa_2013,Doliwa_2014} c.f. \cite{Doliwa_Noumi:2020}.  Whereas the non-equivalent hierarchy  $(\psi^{-1}_{1,\ldots,1}\times id)\,c\mathcal{K}^N_{1,\ldots,1,0}\,(id\times \psi_{1,\ldots,1})$ 
 was firstly considered in \cite{Kassotakis:2021}. Furthermore, since $c\mathcal{K}^N_{1,\ldots,1,0}$ with  $i\in\mathbb{Z}$ (or equivalently $N\rightarrow \infty$) is the non-abelian KP map (Hirota-Miwa map), interesting and open questions concern the underlying geometry and the identification of $c\mathcal{K}^N_{a^1,\ldots,a^N,c}$ as an integrable difference system  when $i\in \mathbb{Z}$ and for generic $a^1,\ldots,a^N$ and $c$.

\appendix
\appendixpage
\section{The non-Abelian $\mathcal{F},\mathcal{K},\mathcal{\Lambda}$ and $\mathcal{H}$ lists} \label{app1}

For generic $a,b,c,$ the families of maps $\mathcal{F}_{a,b,c},$ $\mathcal{K}_{a,b,c},$ $\mathcal{\Lambda}_{a,b,c}$ and $\mathcal{H}_{a,b,c},$   given in Proposition \ref{prop:t} and in (\ref{ybK}) are considered as the generic maps of the $\mathcal{F},\mathcal{K},\mathcal{\Lambda}$ and $\mathcal{H}$ lists. For each generic map various degeneracies can occur by demanding for instance  the coalescence of singularities of the generic map when restricted in the abelian domain. Using as a guiding principle this coalescence of singularities, in what follows we present  the non-Abelian $\mathcal{F}$  list. Then from the $\mathcal{F}-$list and by using Proposition \ref{prop:sec2:1} we obtain the $\mathcal{K},$ $\mathcal{\Lambda}$ and $\mathcal{H}$ lists.

\subsection*{The non-Abelian $\mathcal{F}-$list}
The non-Abelian $\mathcal{F}-$list of quadrirational Yang-Baxter maps reads:
$$
R:(u,p,v,q)\mapsto (x,p,y,q)
$$
where:
\begin{flalign*}
&\begin{aligned}
x=&p\left(p(u-v)(1-v)^{-1}-(q u-p v)(q-v)^{-1} \right)^{-1}\\
  &\quad \qquad \left((u-v)(1-v)^{-1}-(q u-p v)(q-v)^{-1} \right),&\\
y=&q \left( (1-q)u+q-p+(p-1)v\right)\left((1-p)u+(p-q)uv+p(q-1)v \right)^{-1}u,&
\end{aligned}
 &&(\mathcal{F}_I\equiv \mathcal{F}_{1,1,1})\\[7pt]
&\begin{aligned}
x=&q^{-1}(1-v)(u-v)^{-1}(q u-p v+p-q)v(1-v)^{-1},&\\
y=&p^{-1}(q u-p v+p-q)(u-v)^{-1}u,&
\end{aligned}
 &&(\mathcal{F}_{II}\equiv \mathcal{F}_{0,1,1})\\[7pt]
& \begin{aligned}
x=&q^{-1}v(u-v)^{-1}(q u-p v),&\\
y=&p^{-1}(q u-p v)(u-v)^{-1}u,&
\end{aligned}
 &&(\mathcal{F}_{III}\equiv \mathcal{F}_{0,0,1})\\[7pt]
& \begin{aligned}
x=&(u-v)^{-1}(u-v+p-q)v,&\\
y=&(u-v+p-q)(u-v)^{-1}u,&
\end{aligned}
 &&(\mathcal{F}_{IV})\\[7pt]
& \begin{aligned}
x=&v+(p-q)(u-v)^{-1},&\\
y=&u+(p-q)(u-v)^{-1},&
\end{aligned}
 &&(\mathcal{F}_{V})
\end{flalign*}
Note that the $\mathcal{F}_{IV}$ map is obtained from $\mathcal{F}_{II}$ by setting
$$
(x,y,u,v,p,q)\mapsto (1+\epsilon x,1+\epsilon y,1+\epsilon u,1+\epsilon v,1+\epsilon p,1+\epsilon q)
$$
and then sending $\epsilon\rightarrow 0.$ Furthermore the $\mathcal{F}_{V}$ map is obtained from $\mathcal{F}_{IV}$ by setting
$$
(x,y,u,v,p,q)\mapsto (1+\epsilon x,1+\epsilon y,1+\epsilon u,1+\epsilon v,1+\epsilon^2 p,1+\epsilon^2 q)
$$
and then sending $\epsilon\rightarrow 0.$

The non-abelian maps $\mathcal{F}_{III}$ and $\mathcal{F}_{V}$ were first introduced in \cite{ABS:YB}.
\subsection*{The non-Abelian $\mathcal{K}-$list}
The non-Abelian $\mathcal{K}-$list of quadrirational Yang-Baxter maps reads:
$$
R:(u,p,v,q)\mapsto (x,p,y,q)
$$
where:

\begin{flalign*}
&\begin{aligned}
x=&v\left(uv+q(1-u-v)\right)^{-1}\left(p+uv-qu-pv\right),&\\
y=&\left(q+uv-qu+pv\right)\left(uv+p(1-u-v)\right)^{-1}u,&
\end{aligned}
 &&(\mathcal{K}_I\equiv \mathcal{K}_{1,1,1})\\[7pt]
&\begin{aligned}
x=&q^{-1}v\left(1-u-v\right)^{-1}\left(p-q u-pv\right),&\\
y=&p^{-1}\left(q-q u-pv\right)\left(1-u-v\right)^{-1}u,&
\end{aligned}
 &&(\mathcal{K}_{II}\equiv \mathcal{K}_{0,1,1})\\[7pt]
&\begin{aligned}
x=&pv(q u+p v)^{-1}(u+v),&\\
y=&q(u+v)(qu+pv)^{-1}u,&
\end{aligned}
 &&(\mathcal{K}_{III})\\[5pt]
\end{flalign*}

The non-abelian map $\mathcal{K}_{III}$ was first introduced in \cite{Kassotakis:2021}.

\subsection*{The non-Abelian $\mathcal{\Lambda}-$list}
The non-Abelian $\mathcal{\Lambda}-$list of quadrirational Yang-Baxter maps reads:
$$
R:(u,p,v,q)\mapsto (x,p,y,q)
$$
where:

\begin{flalign*}
&\begin{aligned}
x=&pv\left(q u+p v-q (p+u v)\right)^{-1}\left(u+v-q-uv\right),&\\
y=&q\left(u+v-p-uv\right)\left(q u+p v-p(q+uv)\right)^{-1}u,&
\end{aligned}
 &&(\mathcal{\Lambda}_I\equiv \mathcal{\Lambda}_{1,1,1})\\[7pt]
&\begin{aligned}
x=&pv\left(qu+pv-pq\right)^{-1}\left(u+v-q\right),&\\
y=&q\left(u+v-p\right)\left(qu+pv-pq\right)^{-1}u,&
\end{aligned}
 &&(\mathcal{\Lambda}_{II})\\[7pt]
&\begin{aligned}
x=&v(uv+q)^{-1}(uv+p),&\\
y=&(uv+q)(uv+p)^{-1}u,&
\end{aligned}
 &&(\mathcal{\Lambda}_{III})
\end{flalign*}
Note that $\mathcal{\Lambda}_{II}$ is obtained from $(\mathcal{\Lambda}_{a,b,c})$ by setting $c\mapsto a b$ and then taking $a=0.$ We can take $b=0$ instead, but the map we obtain is equivalent up to conjugation with $\chi(\alpha): z\mapsto \alpha z^{-1},$ with  $\mathcal{\Lambda}_{II}$ i.e. this map reads $(\chi(p)\times \chi(q))\, \mathcal{\Lambda}_{II}\,(\chi(p)^{-1}\times \chi(q)^{-1}).$

The non-abelian map $\mathcal{\Lambda}_{III}$ was first introduced in \cite{Doliwa_2014}.

\subsection*{The non-Abelian $\mathcal{H}-$list}
The non-Abelian $\mathcal{H}-$list of quadrirational Yang-Baxter maps reads:
$$
R:(u,p,v,q)\mapsto (x,p,y,q)
$$
where:

\begin{flalign*}
&\begin{aligned}
x=&v\left((uv-q)(v-q)^{-1}-(uv-p)(v-1)^{-1}\right)^{-1}\\
&\quad \qquad\left((uv-q)(v-q)^{-1}-(uv-p)(v-1)^{-1}\right),&\\
y=&\left((1-q)uv+(q-p)v+q(p-1)\right)\left((1-p)uv+(p-q)u+p(q-1)\right)^{-1}u,&
\end{aligned}
 &&(\mathcal{H}_I\equiv \mathcal{H}_{1,1,1})\\[7pt]
&\begin{aligned}
x=&v(uv-q)^{-1}(uv-p),&\\
y=&(uv-q)(uv-p)^{-1}u,&
\end{aligned}
 &&(\mathcal{H}_{III})\\[7pt]
  &\begin{aligned}
x=&v-(p-q)(u+v)^{-1},&\\
y=&u+(p-q)(u+v)^{-1},&
\end{aligned}
 &&(\mathcal{H}_{V})
\end{flalign*}

$\mathcal{H}_V$ has the symmetry $u\mapsto -u.$ Via this symmetry   by using Proposition \ref{prop:sec2:1} we recover
the $\mathcal{F}_V$ map.


\end{document}